%% LyX 2.1.0beta1 created this file.  For more info, see http://www.lyx.org/.
%% Do not edit unless you really know what you are doing.
\documentclass[english]{llncs}
\usepackage[T1]{fontenc}
\usepackage[latin9]{inputenc}
\usepackage{amssymb}

\makeatletter
%%%%%%%%%%%%%%%%%%%%%%%%%%%%%% User specified LaTeX commands.
\usepackage{fancyhdr}
\pagestyle{fancy}

\makeatother

\usepackage{babel}
\begin{document}

\title{Decision Making for Inconsistent Expert Judgments Using Negative
Probabilities}

\titlerunning{Inconsistent Decision Making}

\author{J. Acacio de Barros}

\institute{Liberal Studies Program, San Francisco State University, 1600 Holloway
Ave., San Francisco, CA 94132}
\maketitle
\begin{abstract}
In this paper we provide a simple random-variable example of inconsistent
information, and analyze it using three different approaches: Bayesian,
quantum-like, and negative probabilities. We then show that, at least
for this particular example, both the Bayesian and the quantum-like
approaches have less normative power than the negative probabilities
one. 
\end{abstract}

\section{Introduction}

In recent years the quantum-mechanical formalism (mainly from non-relativistic
quantum mechanics) has been used to model economic and decision-making
processes (see \cite{khrennikov_ubiquitous_2010,haven_quantum_2013}
and references therein). The success of such models may originate
from several related issues. First, the quantum formalism leads to
a propositional structure that does not conform to classical logic
\cite{birkhoff_logic_1936}. Second, the probabilities of quantum
observables do not satisfy Kolmogorov's axioms \cite{weizsacker_probability_1973}.
Third, quantum mechanics describes experimental outcomes that are
highly contextual \cite{bell_einstein-podolsky-rosen_1964,bell_problem_1966,kochen_problem_1975,greenberger_going_1989,de_barros_inequalities_2000}.
Such issues are connected because the logic of quantum mechanics,
represented by a quantum lattice structure \cite{birkhoff_logic_1936},
leads to upper probability distributions and thus to non-Kolmogorovian
measures \cite{holik_discussion_2012,de_barros_probabilistic_2001,de_barros_probabilistic_2010},
while contextuality leads the nonexistence of a joint probability
distribution \cite{suppes_when_1981,suppes_collection_1996}. 

Both from a foundational and from a practical point of view, it is
important to ask which aspects of quantum mechanics are actually needed
for social-science models. For instance, the Hilbert space formalism
leads to non-standard logic and probabilities, but the converse is
not true: one cannot derive the Hilbert space formalism solely from
weaker axioms of probabilities or from quantum lattices. Furthermore,
the quantum formalism yields non-trivial results such as the impossibility
of superluminal signaling with entangled states \cite{dieks_communication_1982}.
These types of results are not necessary for a theory of social phenomena
\cite{de_barros_quantum_2009}, and we should ask what are the minimalistic
mathematical structures suggested by quantum mechanics that reproduce
the relevant features of quantum-like behavior. 

In a previous article, we used reasonable neurophysiological assumptions
to created a neural-oscillator model of behavioral Stimulus-Response
theory \cite{suppes_phase-oscillator_2012}. We then showed how to
use such model to reproduce quantum-like behavior \cite{de_barros_quantum-like_2012}.
Finally, in a subsequent article, we remarked that the same neural-oscillator
model could be used to represent a set of observables that could not
correspond to quantum mechanical observables \cite{de_barros_joint_2012},
in a sense that we later on formalize in Section \ref{sec:Quantum-Approach}.
These results suggest that one of the main quantum features relevant
to social modeling is contextuality, represented by a non-Kolmogorovian
probability measure, and that imposing a quantum formalism may be
too restrictive. This non-Kolmogorovian characteristic would come
when two contexts providing incompatible information about observable
quantities were present. 

Here we focus on the incompatibility of contexts as the source of
a violation of standard probability theory. We then ask the following
question: what formalisms are normative with respect to such incompatibility?
This question comes from the fact that, in its origin, probability
was devised as a normative theory, and not descriptive. For instance,
Richard Jeffrey \cite{jeffrey_probability_1992} explains that ``the
term 'probable' (Latin \emph{probable}) meant \emph{approvable}, and
was applied in that sense, univocally, to opinion and to action. A
probable action or opinion was one such as sensible people would undertake
or hold, in the circumstances.'' Thus, it should come as no surprise
that humans actually violate the rules of probability, as shown in
many psychology experiments. However, if a person is to be considered
``rational,'' according to Boole, he/she should follow the rules
of probability theory. 

Since inconsistent information, as above mentioned, violates the theory
of probability, how do we provide a normative theory of rational decision-making?
There are many approaches, such as Bayesian models or the Dempster-Shaffer
theory, but here we focus on two non-standard ones: quantum-like and
negative probability models. We start first by presenting a simple
case where expert judgments lead to inconsistencies. Then, we approach
this problem first with a standard Bayesian probabilistic method,
followed by a quantum model. Finally, we use negative probability
distributions as a third alternative. We then compare the different
outcomes of each approach, and show that the use of negative probabilities
seems to provide the most normative power among the three. We end
this paper with some comments.

\section{Inconsistent Information\label{sec:Inconsistent-Information}}

As mentioned, the use of the quantum formalism in the social sciences
originates from the observation that Kolmogorov's axioms are violated
in many situations \cite{khrennikov_ubiquitous_2010,haven_quantum_2013}.
Such violations in decision-making seem to indicate a departure from
a rational view, and in particular to though-processes that may involve
irrational or contradictory reasoning, as is the case in non-monotonic
reasoning. Thus, when dealing with quantum-like social phenomena,
we are frequently dealing with some type of inconsistent information,
usually arrived at as the end result of some non-classical (or incorrect,
to some) reasoning. In this section we examine the case where inconsistency
is present from the beginning. 

Though in everyday life inconsistent information abounds, standard
classical logic has difficulties dealing with it. For instance, it
is a well know fact that if we have a contradiction, i.e. $A\&\left(\neg A\right)$,
then the logic becomes trivial, in the sense that any formula in such
logic is a theorem. To deal with such difficulty, logicians have proposed
modified logical systems (e.g. paraconsistent logics \cite{da_costa_theory_1974}).
Here, we will discuss how to deal with inconsistencies not from a
logical point of view, but instead from a probabilistic one. 

Inconsistencies of expert judgments are often represented in the probability
literature by measures corresponding to the experts' subjective beliefs
\cite{genest_combining_1986}. It is frequently argued that this subjective
nature is necessary, as each expert makes statements about outcomes
that are, in principle, available to all experts, and disagreements
come not from sampling a certain probability space, but from personal
beliefs. For example, let us assume that two experts, Alice and Bob,
are examining whether to recommend the purchase of stocks in company
$X$, and each gives different recommendations. Such differences do
not emerge from an objective data (i.e. the actual future prices of
$X$), but from each expert's interpretations of current market conditions
and of company $X$. In some cases the inconsistencies are evident,
as when, say, Alice Alice recommends buy, and Bob recommends sell;
in this case the decision maker would have to reconcile the discrepancies. 

The above example provides a simple case. A more subtle one is when
the experts have inconsistent beliefs that seem to be consistent.
For example, each expert, with a limited access to information, may
form, based on different contexts, locally consistent beliefs without
directly contradicting other experts. But when we take the totality
of the information provided by all of them and try to arrive at possible
inferences, we reach contradictions. Here we want to create a simple
random-variable model that incorporates expert judgments that are
locally consistent but globally inconsistent. This model, inspired
by quantum entanglement, will be used to show the main features of
negative probabilities as applied to decision making. 

Let us start with three $\pm1$-valued random variables, $\mathbf{X}$,
$\mathbf{Y}$, and $\mathbf{Z}$, with zero expectation. If such random
variables have correlations that are too strong then there is no joint
probability distribution \cite{suppes_when_1981}. To see this, imagine
the extreme case where the correlations between the random variables
are $E\left(\mathbf{XY}\right)=E\left(\mathbf{YZ}\right)=E\left(\mathbf{XZ}\right)=-1.$
Imagine that in a given trial we draw $\mathbf{X}=1$. From $E\left(\mathbf{XY}\right)=-1$
it follows that $\mathbf{Y}=-1$, and from $E\left(\mathbf{YZ}\right)=-1$
that $\mathbf{Z}=1$. But this is in contradiction with $E\left(\mathbf{XZ}\right)=-1$,
which requires $\mathbf{Z}=-1$. Of course, the problem is not that
there is a mathematical inconsistency, but that it is not possible
to find a probabilistic sample space for which the variables $\mathbf{X}$,
$\mathbf{Y}$, and $\mathbf{Z}$ have such strong correlations. Another
way to think about this is that the the $\mathbf{X}$ measured together
with $\mathbf{Y}$ is not the same one as the $\mathbf{X}$ measured
with $\mathbf{Z}$: values of $\mathbf{X}$ depend on its context. 

The above example posits a deterministic relationship between all
random variables, but the inconsistencies persist even when weaker
correlations exist. In fact, Suppes and Zanotti \cite{suppes_when_1981}
proved that a joint probability distribution for $\mathbf{X}$, $\mathbf{Y}$,
and $\mathbf{Z}$ exists if and only if 
\begin{eqnarray}
-1 & \leq & E\left(\mathbf{XY}\right)+E\left(\mathbf{YZ}\right)+E\left(\mathbf{XZ}\right)\nonumber \\
 & \leq & 1+2\min\left\{ E\left(\mathbf{XY}\right),E\left(\mathbf{YZ}\right),E\left(\mathbf{XZ}\right)\right\} .\label{eq:joint-inequality}
\end{eqnarray}
The above case violates inequality (\ref{eq:joint-inequality}). 

Let's us now consider the example we want to analyze in detail. Imagine
$\mathbf{X}$, $\mathbf{Y}$, and $\mathbf{Z}$ as corresponding to
future outcomes in a company's stocks. For instance, $\mathbf{X}=1$
corresponds to an increase of the stock value of company $X$ in the
following day, while $\mathbf{X}=-1$ a decrease, and so on. Three
experts, Alice ($A$), Bob ($B$), and Carlos ($C$), have the following
beliefs about those stocks. Alice is an expert on companies $X$ and
$Y$, but knows little or nothing about $Z$, so she only tells us
what we don't know: her expected correlation $E_{A}\left(\mathbf{XY}\right)$.
Bob (Carlos), on the other hand, is only an expert in companies $X$
and $Z$ ($Y$ and $Z$), and he too only tells us about their correlations.
Let us take the case where 
\begin{equation}
E_{A}\left(\mathbf{XY}\right)=-1,\label{eq:xy}
\end{equation}
\begin{equation}
E_{B}\left(\mathbf{XZ}\right)=-\frac{1}{2},\label{eq:xz}
\end{equation}
\begin{equation}
E_{C}\left(\mathbf{YZ}\right)=0,\label{eq:yz}
\end{equation}
where the subscripts refer to each experts. For such case, the sum
of the correlations is $-1\frac{1}{2}$, and according to (\ref{eq:joint-inequality})
no joint probability distribution exists. Since there is no joint,
how can a rational decision-maker decide what to do when faced with
the question of how to bet in the market? In particular, how can she
get information about the joint probability, and in particular the
unknown triple moment $E\left(\mathbf{XYZ}\right)$? In the next sections
we will show how we can try to answer these questions using three
possible approaches: quantum, Bayesian, and signed probabilities.

\section{Quantum Approach\label{sec:Quantum-Approach}}

We start with a comment about the quantum-like nature of correlations
(\ref{eq:xy})-(\ref{eq:yz}). The random variables $\mathbf{X}$,
$\mathbf{Y}$, and $\mathbf{Z}$ with correlations (\ref{eq:xy})-(\ref{eq:yz})
cannot be represented by a quantum state in a Hilbert space for the
observables corresponding to $\mathbf{X}$, $\mathbf{Y}$, and $\mathbf{Z}$.
This claim can be expressed in the form of a simple proposition. 
\begin{proposition}
Let $\hat{X}$, $\hat{Y}$, and $\hat{Z}$ be three observables in
a Hilbert space ${\cal H}$ with eigenvalues $\pm1$, let them pairwise
commute, and let the $\pm1$-valued random variable $\mathbf{X}$,
\textbf{$\mathbf{Y}$}, and $\mathbf{Z}$ represent the outcomes of
possible experiments performed on a quantum system $|\psi\rangle\in{\cal H}$.
Then, there exists a joint probability distribution consistent with
all the possible outcomes of $\mathbf{X}$, \textbf{$\mathbf{Y}$},
and $\mathbf{Z}$.\end{proposition}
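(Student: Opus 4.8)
The plan is to construct the joint distribution explicitly from the spectral projections of the three observables and then verify that its marginals reproduce the Born-rule probabilities of every experiment one could actually perform. Since $\hat X$ has spectrum contained in $\{+1,-1\}$ it satisfies $\hat X^{2}=I$, so its spectral resolution is carried by the orthogonal projections $P_X^{\pm}=\frac{1}{2}(I\pm\hat X)$, with $P_X^{+}+P_X^{-}=I$ and $P_X^{+}P_X^{-}=0$; define $P_Y^{\pm}$ and $P_Z^{\pm}$ the same way.

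The step I expect to be the real crux is upgrading \emph{pairwise} commutativity of $\hat X$, $\hat Y$, $\hat Z$ to something that lets me multiply all three projection families together at once. Each $P_X^{i}$ is an affine function of $\hat X$, so $[\hat X,\hat Y]=0$ gives $[P_X^{i},P_Y^{j}]=0$, and likewise $[P_X^{i},P_Z^{k}]=0$ and $[P_Y^{j},P_Z^{k}]=0$; more conceptually, since the operators are bounded, $\hat X$ also commutes with the product $\hat Y\hat Z$ (write $\hat X\hat Y\hat Z=\hat Y\hat X\hat Z=\hat Y\hat Z\hat X$), so the unital $*$-algebra they generate is commutative. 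For possibly unbounded observables this passage would require care, but the hypothesis of $\pm1$ eigenvalues makes everything bounded and the point is routine.

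Next, for $i,j,k\in\{+,-\}$ set $E_{ijk}=P_X^{i}P_Y^{j}P_Z^{k}$. Mutual commutativity makes each $E_{ijk}$ an orthogonal projection ($E_{ijk}^{*}=E_{ijk}$ and $E_{ijk}^{2}=E_{ijk}$ follow by reordering the factors and using idempotency), distinct triples give mutually orthogonal $E_{ijk}$, and $\sum_{i,j,k}E_{ijk}=(P_X^{+}+P_X^{-})(P_Y^{+}+P_Y^{-})(P_Z^{+}+P_Z^{-})=I$. Then
\[
p(i,j,k):=\langle\psi|E_{ijk}|\psi\rangle=\|E_{ijk}|\psi\rangle\|^{2}\ge 0,\qquad \sum_{i,j,k}p(i,j,k)=\langle\psi|\psi\rangle=1,
\]
so $p$ is a genuine probability distribution over the (at most) eight value assignments to $(\mathbf X,\mathbf Y,\mathbf Z)$.

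Finally I would check that $p$ is consistent with every performable experiment. Marginalizing over $\mathbf Z$, $\sum_{k}p(i,j,k)=\langle\psi|P_X^{i}P_Y^{j}|\psi\rangle$, which is precisely the quantum probability that a joint measurement of the commuting pair $\hat X,\hat Y$ returns $(i,j)$; the analogous identities hold for the pairs $(\hat X,\hat Z)$ and $(\hat Y,\hat Z)$, and $\sum_{j,k}p(i,j,k)=\langle\psi|P_X^{i}|\psi\rangle$ recovers the single-observable statistics. Hence the one distribution $p$ reproduces all the quantum predictions for $|\psi\rangle$, which is the assertion of the proposition. Apart from the commutativity remark in the second paragraph, every step is bookkeeping with commuting projections and the Born rule.
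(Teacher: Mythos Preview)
Your proof is correct. Both arguments rest on the same observation---that pairwise commutativity of bounded observables propagates to all polynomials in them, so the three spectral families are jointly compatible---but the routes diverge from there. The paper's proof stays at the level of a physical principle: it notes that $\hat X\hat Y$, $\hat X\hat Z$, $\hat Y\hat Z$, $\hat X\hat Y\hat Z$ are themselves Hermitian and mutually commuting, invokes the quantum-mechanical fact that commuting observables are simultaneously measurable, and concludes that a full data table (hence a joint) exists. Your version is the mathematical unpacking of that sentence: you build the joint projection-valued measure $E_{ijk}=P_X^iP_Y^jP_Z^k$ explicitly, write down $p(i,j,k)=\langle\psi|E_{ijk}|\psi\rangle$, and verify non-negativity, normalization, and the correct marginals by hand. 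Your approach buys you an explicit formula for the joint and avoids any appeal to ``simultaneous measurability'' as an unexamined axiom; the paper's approach is shorter and closer to how a physicist would phrase the point.
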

\begin{proof}
Because $\hat{X}$, $\hat{Y}$, and $\hat{Z}$ are observables and
they pairwise commute, it follows that their combinations, $\hat{X}\hat{Y}$,
$\hat{Y}\hat{Z}$, $\hat{X}\hat{Z}$, and $\hat{X}\hat{Y}\hat{Z}$
are also observables, and they commute with each other. For instance,
\[
\left(\hat{X}\hat{Y}\hat{Z}\right)^{\dagger}=\hat{Z}^{\dagger}\hat{Y}^{\dagger}\hat{X}^{\dagger}=\hat{X}\hat{Y}\hat{Z}.
\]
Furthermore, 
\[
[\hat{X}\hat{Y}\hat{Z},\hat{X}]=[\hat{X}\hat{Y}\hat{Z},\hat{Y}]=\cdots=[\hat{X}\hat{Y}\hat{Z},\hat{X}\hat{Z}]=0.
\]
Therefore, quantum mechanics implies that all three observables $\hat{X}$,
$\hat{Y}$, and $\hat{Z}$ can be simultaneously measured. Since this
is true, for the same state $|\psi\rangle$ we can create a full data
table with all three values of $\mathbf{X}$, \textbf{$\mathbf{Y}$},
and $\mathbf{Z}$ (i.e., no missing values), which implies the existence
of a joint. 
\end{proof}
So, how would a quantum-like model of correlations (\ref{eq:xy})--(\ref{eq:yz})
be like? The above result depends on the use of the same quantum state
$|\psi\rangle$ throughout the many runs of the experiment, and to
circumvent it we would need to use different states for the system.
In other words, if we want to use a quantum formalism to describe
the correlations (\ref{eq:xy})-(\ref{eq:yz}), a $|\psi\rangle$
would have to be selected for each run such that a different state
would be used when we measure $\hat{X}\hat{Y}$, e.g. $|\psi\rangle_{xy}$,
than when we measure $\hat{X}\hat{Z}$, e.g. $|\psi\rangle_{xz}$.
Then, the quantum description could be accomplished by the state 
\[
|\psi\rangle=c_{A}|A\rangle\otimes|\psi\rangle_{xy}+c_{B}|B\rangle\otimes|\psi\rangle_{xz}+c_{C}|C\rangle\otimes|\psi\rangle_{yz}.
\]
This state would model the correlations the following way. When Alice
makes her choice, she uses a projector into her ``state of knowledge''
$\hat{P}_{A}=|A\rangle\langle A|$, and gets the correlation $E_{A}\left(\mathbf{XY}\right)$,
and similarly for Bob and Carlos. 

In the above example, all correlations and expectations are given,
and the only unknown is the triple moment $E\left(\mathbf{XYZ}\right)$.
Furthermore, since we do not have a joint probability distribution,
we cannot compute the range of values for such moment based on the
expert's beliefs. But the question still remains as to what would
be our best bet given what we know, i.e., what is our best guess for
$E\left(\mathbf{XYZ}\right)$. The quantum mechanical approach does
not address this question, as it is not clear how to get it from the
formalism given that any superposition of the states preferred by
Alice, Bob, and Carlos are acceptable (i.e., we can choose any values
of $c_{A}$, $c_{B}$, and $c_{C}$).

\section{Bayesian Approach\label{sec:Bayesian-Approach}}

Here we focus again on the unknown triple moment. As we mentioned
before, there are many different ways to approach this problem, such
as paraconsistent logics, consensus reaching, or information revision
to restore consistency. Common to all those approaches is the complexity
of how to resolve the inconsistencies, often with the aid of \emph{ad
hoc} assumptions \cite{genest_combining_1986}. Here we show how a
Bayesian approach would deal with the issue \cite{morris_decision_1974,morris_combining_1977}. 

In the Bayesian approach, a decision maker, Deanna ($D$), needs to
access what is the joint probability distribution from a set of inconsistent
expectations. To set the notation, let us first look at the case when
there is only one expert. Let $P_{A}(x)=P_{A}(\mathbf{X}=x|\delta_{A})$
be the probability assigned to event $x$ by Alice conditioned on
Alice's knowledge $\delta_{A}$, and let $P_{D}(x)=P_{D}(\mathbf{X}=x|\delta_{D})$
be Deanna's prior distribution, also conditioned on her knowledge
$\delta_{D}$. Furthermore, let $\mathbf{P}_{A}=P_{A}\left(x\right)$
be a continuous random variable, $\mathbf{P}_{A}\in[0,1],$ such that
its outcome is $P_{A}\left(x\right)$. The idea behind $\mathbf{P}_{A}$
is that consulting an expert is similar to conducting an experiment
where we sample the experts opinion by observing a distribution function,
and therefore we can talk about the probability that an expert will
give an answer for a specific sample point. Then, for this case, Bayes's
theorem can be written as 
\[
P_{D}'\left(x|\mathbf{P}_{A}=P_{A}\left(x\right)\right)=\frac{P_{D}\left(\mathbf{P}_{A}=P_{A}\left(x\right)\right)P_{D}\left(x\right)}{P_{D}\left(\mathbf{P}_{A}=P_{A}\left(x\right)\right)},
\]
where $P_{D}'\left(x|\mathbf{P}_{A}=P_{A}\left(x\right)\right)$ is
Deanna's posterior distribution revised to take into account the expert's
opinion. As is the case with Bayes's theorem, the difficulty lies
on determining the likelihood function $P_{D}\left(\mathbf{P}_{A}\right)$,
as well as the prior. This likelihood function is, in a certain sense,
Deanna's model of Alice, as it is what Deanna believes are the likelihoods
of each of Alice's beliefs. In other words, she should have a model
of the experts. Such model of experts is akin to giving each expert
a certain measure of credibility, since an expert whose model doesn't
fit Deanna's would be assigned lower probability than an expert whose
model fits. 

The extension for our case of three experts and three random variables
is cumbersome but straightforward. For Alice, Bob, and Carlos, Deanna
needs to have a model for each one of them, based on her prior knowledge
about $\mathbf{X}$, \textbf{$\mathbf{Y}$},\textbf{ }and $\mathbf{Z}$,
as well as Alice, Bob, and Carlos. Following Morris \cite{morris_decision_1974},
we construct a set $E$ consisting of our three experts joint priors:
\[
E=\left\{ P_{A}\left(x,y\right),P_{B}\left(y,z\right),P_{C}\left(x,z\right)\right\} .
\]
Deanna's is now faced with the problem of determining the posterior
$P'_{D}\left(x|E\right),$ using Bayes's theorem, given her new knowledge
of the expert's priors. 

In a Bayesian approach, the decision maker should start with a prior
belief on the stocks of $X$, $Y$, and $Z$, based on her knowledge.
There is no recipe for choosing a prior, but let us start with the
simple case where Deanna's lack of knowledge about $X$, $Y$, and
$Z$ means she starts with the initial belief that all combinations
of values for $\mathbf{X}$, $\mathbf{Y}$, and $\mathbf{Z}$ are
equiprobable. Let us use the following notation for the probabilities
of each atom:\hfill $p_{xyz}=P\left(\mathbf{X}=+1,\mathbf{Y}=+1,\mathbf{Z}=+1\right)$,\hfill
$p_{xy\overline{z}}=P\left(\mathbf{X}=+1,\mathbf{Y}=+1,\mathbf{Z}=-1\right)$,\hfill
$p_{\overline{x}y\overline{z}}=P\left(\mathbf{X}=-1,\mathbf{Y}=+1,\mathbf{Z}=-1\right)$,
and so on. Then Deanna's prior probabilities for the atoms are 
\[
p_{xyz}^{D}=p_{\overline{x}yz}^{D}=\cdots=p_{\overline{xyz}}^{D}=\frac{1}{16},
\]
where the superscript $D$ refers to Deanna. 

When reasoning about the likelihood function, Deanna asks what would
be the probable distribution of responses of Alice if somehow she
(Deanna) could see the future (say, by consulting an Oracle) and find
out that $E\left(XY\right)=-1$. For such case, it would be reasonable
for Alice to think it more probable to have, say, $\overline{x}y$
than $xy$, since she was consulted as an expert. So, in terms of
the correlation $\epsilon_{A}$, Deanna could assign the following
likelihood function:
\begin{equation}
P_{D}\left(\epsilon_{A}|\overline{x}y\right)=P_{D}\left(\epsilon_{A}|x\overline{y}\right)=\frac{1}{4}\left(1-\epsilon_{A}\right)^{2},\label{eq:modelPM}
\end{equation}
\begin{equation}
P_{D}\left(\epsilon_{A}|xy\right)=P_{D}\left(\epsilon_{A}|\overline{x}\overline{y}\right)=1-\frac{1}{4}\left(1-\epsilon_{A}\right)^{2},\label{eq:modelPP}
\end{equation}
where the minus sign represents the negative, i.e. $p_{xy\cdot}^{A}=p_{\overline{xy}\cdot}=\frac{1}{4}\left(1+\epsilon_{A}\right)$
and $p_{\overline{x}y\cdot}=p_{x\overline{y}\cdot}=\frac{1}{4}\left(1-\epsilon_{A}\right)$.
Notice that the choice of likelihood function is arbitrary. 

Deanna's posterior, once she knows that Alice thought the correlation
to be zero (cf. (\ref{eq:xy})), constitutes, as we mentioned above,
an experiment. To illustrate the computation, we find its value below,
from Alice's expectation $E_{A}\left(\mathbf{XY}\right)=-1$. From
Bayes's theorem 
\begin{eqnarray*}
p_{xyz}^{D|A} & = & k\left[1-\frac{1}{4}\left(1-\epsilon_{A}\right)^{2}\right]\frac{1}{8}\\
 & = & \frac{1}{4}\left[1-\frac{1}{4}\left(1-\epsilon_{A}\right)^{2}\right]=\frac{3}{16},
\end{eqnarray*}
where the normalization constant $k$ is given by 
\begin{eqnarray*}
k^{-1} & = & \left[1-\frac{1}{4}\left(1-\epsilon_{A}\right)^{2}\right]\frac{1}{8}+\left[\frac{1}{4}\left(1-\epsilon_{A}\right)^{2}\right]\frac{1}{8}+\left[\frac{1}{4}\left(1-\epsilon_{A}\right)^{2}\right]\frac{1}{8}\\
 &  & +\left[1-\frac{1}{4}\left(1-\epsilon_{A}\right)^{2}\right]\frac{1}{8}+\left[\frac{1}{4}\left(1-\epsilon_{A}\right)^{2}\right]\frac{1}{8}+\left[\frac{1}{4}\left(1-\epsilon_{A}\right)^{2}\right]\frac{1}{8}\\
 &  & +\left[1-\frac{1}{4}\left(1-\epsilon_{A}\right)^{2}\right]\frac{1}{8}+\left[1-\frac{1}{4}\left(1-\epsilon_{A}\right)^{2}\right]\frac{1}{8},
\end{eqnarray*}
and we use the notation $p^{D|A}$ to explicitly indicate that this
is Deanna's posterior probability informed by Alice's expectation.
Similarly, we have 
\[
p_{\overline{x}yz}^{D|A}=p_{\overline{x}y\overline{z}}^{D|A}=p_{x\overline{y}z}^{D|A}=p_{x\overline{y}\overline{z}}^{D|A}=\frac{1}{16},
\]
and 
\[
p_{xyz}^{D|A}=p_{xy\overline{z}}^{D|A}=p_{\overline{x}\overline{y}z}^{D|A}=p_{\overline{x}\overline{y}\overline{z}}^{D|A}=\frac{3}{16}.
\]
If we apply Bayes's theorem twice more, to take into account Bob's
and Carlos's opinions given by correlations (\ref{eq:xz}) and (\ref{eq:yz}),
using likelihood functions similar to the one above, we compute the
following posterior joint probability distribution,
\[
p_{xyz}^{D|ABC}=p_{x\overline{y}z}^{D|ABC}=p_{\overline{x}y\overline{z}}^{D|ABC}=p_{\overline{xyz}}^{D|ABC}=0,
\]
\[
p_{\overline{x}yz}^{D|ABC}=p_{x\overline{yz}}^{D|ABC}=\frac{7}{68},
\]
and 
\[
p_{xy\overline{z}}^{D|ABC}=p_{\overline{xy}z}^{D|ABC}=\frac{27}{68}.
\]
Finally, from the joint, we can compute all the moments, including
the triple moment, and obtain $E\left(\mathbf{XYZ}\right)=0$. 

It is interesting to notice that the triple moment from the posterior
is the same as the one from the prior. This is no coincidence. Because
the revisions from Bayes's theorem only modify the values of the correlations,
nothing is changed with respect to the triple moment. In fact, if
we compute Deanna's posterior distribution for any values of the correlations
$\epsilon_{A}$, $\epsilon_{B}$, and $\epsilon_{C}$, we obtain the
same triple moment, as it comes solely from Deanna's prior distribution.
Thus, the Bayesian approach, though providing a proper distribution
for the atoms, does not in any way provide further insights on the
triple moment.

\section{Negative Probabilities\label{sec:Negative-Probabilities}}

We now want to see how we can use negative probabilities to approach
the inconsistencies from Alice, Bob, and Carlos. The first person
to seriously consider using negative probabilities was Dirac in his
Bakerian Lectures on the physical interpretation of relativistic quantum
mechanics \cite{dirac_bakerian_1942}. Ever since, many physicists,
most notably Feynman \cite{feynman_negative_1987}, tried to use them,
with limited success, to describe physical processes (see \cite{muckenheim_review_1986}
or \cite{khrennikov_interpretations_2009} and references therein).
The main problem with negative probabilities is its lack of a clear
interpretation, which limits its use as a purely computational tool.
It is the goal of this section to show that, at least in the context
of a simple example, negative probabilities can provide useful normative
information. 

Before we discuss the example, let us introduce negative probabilities
in a more formal way%
\footnote{We limit our discussion to finite spaces.%
}. Let us propose the following modifications to Kolmogorov's axioms. 
\begin{definition}
Let $\Omega$ be a finite set, ${\cal F}$ an algebra over $\Omega$,
$p$ and $p'$ real-valued functions, $p:{\cal F}\rightarrow\mathbb{R}$,
$p':{\cal F}\rightarrow\mathbb{R}$, and $M^{-}=\sum_{\omega_{i}\in\Omega}\left|p\left(\left\{ \omega_{i}\right\} \right)\right|$.
Then $\left(\Omega,{\cal F},p\right)$ is a negative probability space
if and only if:
\begin{eqnarray*}
\mbox{A.} &  & \forall p'\left(M^{-}\leq\sum_{\omega_{i}\in\Omega}\left|p'\left(\left\{ \omega_{i}\right\} \right)\right|\right)\\
\mbox{B.} &  & \sum_{\omega_{i}\in\Omega}p\left(\left\{ \omega_{i}\right\} \right)=1\\
\mbox{C.} &  & p\left(\left\{ \omega_{i},\omega_{j}\right\} \right)=p\left(\left\{ \omega_{i}\right\} \right)+p\left(\left\{ \omega_{j}\right\} \right),~~i\neq j.
\end{eqnarray*}
\end{definition}
\begin{remark}
\label{In-the-case}If it is possible to define a proper joint probability
distribution, then $M^{-}=0$, and A-C are equivalent to Kolmogorov's
axioms. 
\end{remark}
Going back to our example, we have the following equations for the
atoms.
\begin{equation}
p_{xyz}+p_{\overline{x}yz}+p_{x\overline{y}z}+p_{xy\overline{z}}+p_{x\overline{y}\overline{z}}+p_{\overline{x}y\overline{z}}+p_{\overline{x}\overline{y}z}+p_{\overline{x}\overline{y}\overline{z}}=1,\label{eq:total-prob}
\end{equation}
\begin{equation}
p_{xyz}+p_{\overline{x}yz}+p_{x\overline{y}z}+p_{xy\overline{z}}-p_{x\overline{y}\overline{z}}-p_{\overline{x}y\overline{z}}-p_{\overline{x}\overline{y}z}-p_{\overline{x}\overline{y}\overline{z}}=0,\label{eq:expectation-x}
\end{equation}
\begin{equation}
p_{xyz}+p_{\overline{x}yz}-p_{x\overline{y}z}+p_{xy\overline{z}}-p_{x\overline{y}\overline{z}}+p_{\overline{x}y\overline{z}}-p_{\overline{x}\overline{y}z}-p_{\overline{x}\overline{y}\overline{z}}=0,\label{eq:expectation-y}
\end{equation}
\begin{equation}
p_{xyz}+p_{\overline{x}yz}+p_{x\overline{y}z}-p_{xy\overline{z}}-p_{x\overline{y}\overline{z}}-p_{\overline{x}y\overline{z}}+p_{\overline{x}\overline{y}z}-p_{\overline{x}\overline{y}\overline{z}}=0,\label{eq:expectation-z}
\end{equation}
\begin{equation}
p_{xyz}-p_{\overline{x}yz}-p_{x\overline{y}z}+p_{xy\overline{z}}-p_{x\overline{y}\overline{z}}-p_{\overline{x}y\overline{z}}+p_{\overline{x}\overline{y}z}+p_{\overline{x}\overline{y}\overline{z}}=0,\label{eq:correlation-xy}
\end{equation}
\begin{equation}
p_{xyz}-p_{\overline{x}yz}+p_{x\overline{y}z}-p_{xy\overline{z}}-p_{x\overline{y}\overline{z}}+p_{\overline{x}y\overline{z}}-p_{\overline{x}\overline{y}z}+p_{\overline{x}\overline{y}\overline{z}}=-\frac{1}{2},\label{eq:correlation-yz}
\end{equation}
\begin{equation}
p_{xyz}+p_{\overline{x}yz}-p_{x\overline{y}z}-p_{xy\overline{z}}+p_{x\overline{y}\overline{z}}-p_{\overline{x}y\overline{z}}-p_{\overline{x}\overline{y}z}+p_{\overline{x}\overline{y}\overline{z}}=-1,\label{eq:correlation-xz}
\end{equation}
where (\ref{eq:total-prob}) comes from the fact that all probabilities
must sum to one, (\ref{eq:expectation-x})-(\ref{eq:expectation-z})
from the zero expectations for $\mathbf{X}$, $\mathbf{Y}$, and $\mathbf{Z}$,
and (\ref{eq:correlation-xy})-(\ref{eq:correlation-xz}) from the
pairwise correlations. Of course, this problem is underdetermined,
as we have seven equations and eight unknowns (we don't know the unobserved
triple moment). A general solution to (\ref{eq:total-prob})-(\ref{eq:expectation-z})
is
\begin{eqnarray}
p_{xyz} & =-p_{\overline{x}yz} & =-\frac{1}{8}-\delta,\label{eq:first-line-solution}\\
p_{x\overline{y}z} & =p_{\overline{x}y\overline{z}}= & \frac{3}{16},\label{eq:second-line-solution}\\
p_{xy\overline{z}} & =p_{\overline{x}\overline{y}z}= & \frac{5}{16},\label{eq:third-line-solution}\\
p_{x\overline{y}\overline{z}} & =-p_{\overline{x}\overline{y}\overline{z}}= & -\delta,\label{eq:fourth-line-solution}
\end{eqnarray}
where $\delta$ is a real number. From (\ref{eq:first-line-solution})--(\ref{eq:fourth-line-solution})
it follows that, for any $\delta$, some probabilities are negative.
First, we notice that we can use the joint probability distribution
to compute the expectation of the triple moment, which is $E(\mathbf{XYZ})=-\frac{1}{4}-4\delta.$
Since $-1\leq E\left(\mathbf{XYZ}\right)\leq-1$, it follows that
$-1\frac{1}{4}\leq\delta\leq\frac{3}{4}$. Of course, $\delta$ is
not determined by the lower moments, as we should expect, but axiom
A requires $M^{-}$ to be minimized. So, to minimize $M^{-}$, we
focus only on the terms that contribute to it: the negative ones.
To do so, let us split the problem into several different sections.
Let us start with $\delta\geq0$, which gives $M_{\delta\geq0}^{-}=-\frac{1}{8}-2\delta,$
having a minimum of $-\frac{1}{8}$ when $\delta=0$. For $-1/8\leq\delta<0$,
$M_{-\frac{1}{8}\leq\delta<0}^{-}=\delta-\frac{1}{8}+\delta=-\frac{1}{8},$
which is a constant value. Finally, for $\delta<-1/8$, the mass for
the negative terms is given by $M_{\delta<-\frac{1}{8}}^{-}=\frac{1}{8}-2\delta.$
Therefore, negative mass is minimized when $\delta$ is in the following
range 
\[
-\frac{1}{8}\leq\delta\leq0.
\]
Now, going back to the triple correlation, we see that by imposing
a minimization of the negative mass we restrict its values to the
following range:
\[
-\frac{1}{4}\leq E\left(\mathbf{XYZ}\right)\leq\frac{1}{2}.
\]
But equations (\ref{eq:total-prob})-(\ref{eq:correlation-xz}) and
the fact that the random variables are $\pm1$-valued allow any correlation
between $-1$ and $1$, and we see that the minimization of the negative
mass offers further constraints to a decision maker. 

Before we proceed, we need to address the meaning of negative probabilities,
as well as the minimization of $M^{-}$. We saw from Remark \ref{In-the-case}
that when $M^{-}$ is zero we obtain a standard probability measure.
Thus, the value of $M^{-}$ is a measure of how far $p$ is from a
proper joint probability distribution, and minimizing it is equivalent
to asking $p$ to be as close as possible to a proper joint, while
at the same time keeping the marginals. This point in itself should
be sufficient to suggest some normative use to negative probabilities:
a negative probability (with $M^{-}$ minimized) gives us the most
rational bet we can make given inconsistent information. But the question
remains as to the meaning of negative probabilities. 

To give them meaning, let us redefine the probabilities from $p$
to $p^{*}$ such that $p^{*}\left(\left\{ \omega_{i}\right\} \right)=0$
when $p\left(\left\{ \omega_{i}\right\} \right)\leq0$. It follows
from this redefinition that $\sum_{\omega_{i}\in\Omega}p^{*}\left(\left\{ \omega_{i}\right\} \right)\geq1$.
This newly defined probability would not violate Kolmogorov's nonnegativity
axiom, but instead would violate B above. The $p^{*}$'s corresponds
to de Finetti's upper probability measures, and axiom A above guarantees
that such upper is as close to a proper distribution as possible.
Thus, according to a subjective interpretation, the negative probability
atoms correspond to impossible events, and the positive ones to an
upper probability measure consistent with the marginals. Once again,
the triple moment corresponds to our best bet.

\section{Conclusions\label{sec:Conclusions}}

The quantum mechanical formalism has been successful in the social
sciences. However, one of the questions we raised elsewhere was whether
some minimalist versions of the quantum formalism which do not include
a full version of Hilbert spaces and observables could be relevant
\cite{de_barros_joint_2012}. In this paper we adapted the example
modeled with neural oscillators in \cite{de_barros_joint_2012} to
a different case where each random variable could be interpreted as
outcomes of a market, and where the inconsistencies between the correlations
could be interpreted as inconsistencies between experts' beliefs.
Such inconsistencies result in the impossibility to define a standard
probability measure that allows a decision-maker to select an expectation
for the triple moment. The computation of the triple moment from the
inconsistent information was done in this paper using three different
approaches: Bayesian, quantum-like, and negative probabilities. 

With the Bayesian approach, we showed that not only does it rely on
a model of the experts (the likelihood function), but also that no
new information is gained from it, as the triple moment from the prior
is not changed by the application of Bayes's rules. Therefore, the
Bayesian approach had nothing to say about the triple moment. 

Similar to the Bayesian, the quantum approach also had nothing to
say about the triple moment, as the arbitrariness of choices for quantum
superpositions (without any additional constraints) results in all
values of triple moments being possible. In fact, the quantum approach
above could be similarly implemented using a contextual theory. For
instance, Dzhafarov \cite{dzhafarov_selective_2003} proposes the
use of an extended probability space where different random variables
(say, $\mathbf{X}_{z}$ and $\mathbf{X}_{y}$) are used, and where
we then ask how similar they are to each other (for instance, what
is the value of $P\left(\mathbf{X}_{z}\neq\mathbf{X}_{y}\right)$).
However, as with the quantum case, the meaning given to $P\left(\mathbf{X}=1\right)$
in our example does not fit with this model, as it corresponds to
the expectation of an increase in the stock value of company $X$
in the future, and the $X$ that Alice is talking about is exactly
the same one for Bob and Carlos, as it corresponds to the increase
in the objective value (in the future) of a stock in the same company.
Furthermore, as expected due to its similar features, this approach
has the same problem as the quantum one in terms of dealing with the
triple moment, but it has the advantage of making it clearer what
the problem is: the triple moment does not exist because we have nine
random variables instead of three, as we have three different contexts.

The negative probability approach, on the other hand, led to a nontrivial
constraint to the possible values of the triple moment. When used
as a computational tool, a joint probability distribution, and with
it the triple moment, could be obtained. Together with the minimization
of the negative mass $M^{-}$, this joint leads to a nontrivial range
of possible values for the triple moment. Given the interpretation
of negative probabilities with respect to uppers, it follows that
this range is our best guess as to where the values of the triple
moment should lie, given our inconsistent information. Thus, negative
probabilities provide the decision maker with some normative information
that is unavailable in either the Bayesian or the quantum-like approaches.

\paragraph{Acknowledgments. }

Many of the details about negative probabilities were developed in
collaboration with Patrick Suppes, Gary Oas, and Claudio Carvalhaes
on the context of a seminar held at Stanford University in Spring
2011. I am indebted to them as well as the seminar participants for
fruitful discussions. I also like to thank Tania Magdinier, Niklas
Damiris, Newton da Costa, and the anonymous referees for comments
and suggestions. 

\bibliographystyle{splncs}
\bibliography{QuantumBrain}

\end{document}